\documentclass[12pt,reqno,tbtags]{amsart}

\usepackage{setspace}

\usepackage{amsmath, amsfonts, amsthm}
\usepackage{mathtools}
\usepackage{color}
\usepackage{float}
\usepackage{url}
\usepackage{comment}

\usepackage{pgfplots}
\pgfplotsset{compat=1.18}

\usepackage[dvipsnames]{xcolor}

\usepackage[top=3cm, bottom=2.5cm, left=1cm, right=1cm]{geometry}

\usepackage{newcent}
\usepackage[T1]{fontenc}
\usepackage[utf8]{inputenc}
\usepackage{babel}
\usepackage{bbm}
\usepackage{xspace}
\usepackage[hidelinks]{hyperref}
\usepackage{enumitem}
\usepackage[most]{tcolorbox}
\usepackage{xpatch}
\usepackage{stmaryrd}

\usepackage{amssymb}
\usepackage{enumitem}
\usepackage{makecell}
\usepackage{graphicx}%
\usepackage{multirow}%
\usepackage{amsmath,amssymb,amsfonts}%

\usepackage{mathrsfs}%
\usepackage[title]{appendix}%
\usepackage{xcolor}%
\usepackage{textcomp}%
\usepackage{manyfoot}%
\usepackage{booktabs}%
\usepackage{algorithm}%
\usepackage{algorithmicx}%
\usepackage{algpseudocode}%
\usepackage{listings}

\usepackage{mathtools}

\usepackage{comment}
\usepackage{caption}
\usepackage{subcaption}

\usepackage{fullpage}

\usepackage{fancyhdr}
\newcommand{\I}{\mathbbm{1}}
\newcommand{\EXP}{\mathbb{E}}
\newcommand{\Var}{\mathbb{V}}
\newcommand{\PROB}{\mathbb{P}}

\newcommand{\inlaw}{\buildrel {\mathcal L} \over =}
\newcommand{\RR}{\mathbb{R}}

\newcommand{\isdef}{\buildrel {\rm def} \over =}

\usepackage{makecell} 

\newtheorem{lemma}{Lemma}

\newtheorem{remark}{Remark}

\usepackage[numbers, sort&compress]{natbib}

\begin{document}

\title[PearsonIV]{The Pearson IV distribution: Random variate generation and applications}

\author{Luc Devroye$\dagger$}
\thanks{$\dagger$School of Computer Science, McGill University, 
		Montr\'eal, Qu\'ebec,  Canada: {lucdevroye@gmail.com}. Supported by the Natural Sciences and Engineering Research Council of Canada (NSERC) under grant number RGPIN-2024-04164}
 
 \author{Joe R. Hill$\ddagger$}
\thanks{$\ddagger$QBX Consulting, Austin, TX, USA: {joehill.qbx@gmail.com}}


\begin{abstract}
We develop uniformly fast random variate generators for
the Pearson IV distribution that can be used over the entire
range of both shape parameters and highlight some applications in a Bayesian setting.
\end{abstract}


\keywords{Random variate generation, 
Pearson IV distribution, 
Rejection method, 
Simulation, Monte Carlo method, Expected time analysis,
Log-concave distributions,
Probability inequalities}

\subjclass[2010]{65C10, 65C05, 11K45, 68U20}

\maketitle




\section{The Pearson IV distribution}\label{pearsonIV}

Undoubtedly, the most enigmatic member of Pearson's family of distributions (Pearson, 1895)\cite{Pea}
is the Pearson IV distribution, which is characterized by two
shape parameters, $a > 1/2$ and $s \in \RR$.  Its density on the real line
is given by
\begin{align}
\label{pearsonIV}
f(x) = \frac {\gamma \, e^{s \arctan (x)} }{ (1+x^2)^a},
\end{align}
where, by Legendre's duplication formula,
$$
\gamma 
\isdef  \frac { \left|  \Gamma (a - is/2) \right|^2  }{ \Gamma (a) \Gamma (a -1/2) \Gamma (1/2) }
= \frac { 4^{a-1} \left|  \Gamma (a - is/2) \right|^2  }{ \pi \Gamma (2a-1) },
$$
and $\Gamma$ is the complex gamma function.
We write $P_{a, s}$ to denote a Pearson type IV random variable with the given
parameters. 
Since $P_{a, s} \inlaw - P_{a, -s}$, we assume, without loss of generality, that $s \ge 0$.

The purpose of this paper is to propose random
variate generation algorithms that are uniformly fast over all choices of the parameters.
To the best of our knowledge, no explicit uniformly fast methods are known for this important distribution.

Section 2 recalls some facts about $P_{a,0}$, the Student-t distribution.
In the subsequent sections, we develop several generators
for the Pearson IV distribution. Some of these
require access to the normalization constant $\gamma$, which depends upon the complex gamma function.
However, we exhibit several simple algorithms that do not require explicit knowledge of $\gamma$.
We recall the two design principles for all algorithms given below:

\begin{enumerate}
\item[(i)] The generators have to be theoretically exact; no approximation of any kind is allowed.
\item[(ii)] The expected time per random variate should be uniformly bounded over all parameter choices.
\end{enumerate}

\medskip


\section{Student-t distribution}\label{Student}

The random variable $T_a$ with parameter $a > 0$ is a Student-t$(a)$ random variable if it has density
$$
\frac{1 }{ B(a/2, 1/2) \sqrt{a} ( 1+x^2 /a)^{\frac{a+1}{ 2}} },
$$
where $B$ denotes the beta function. 
First derived by Helmert \cite{Hel2, Hel} and L\"uroth \cite{Lur} and later by Pearson \cite{Pea}, it was named after Gosset (William S. Gosset \cite{Stu}) by Ronald Fisher.
It is in the Pearson IV family, as 
$$
\frac{ T_{2a-1} }{ \sqrt{2a-1} }  \inlaw P_{a,0}.
$$
We recall that
$$
T_a \inlaw \frac{ N }{ \sqrt{\frac{G_{a/2}}{a/2} }},
$$
where $N$ is standard normal, and $G_a$ denotes an independent gamma $(a)$ random variable.
Let $H_{a,b} = G_a/G_b$ be the ratio of two independent gamma random
variables (also called the beta prime distribution or beta distribution of the second kind with parameters $a$ and $b$) and let $B_{a,b}$ be a beta random variable.
From the definition of the Student-t distribution,
$$
\frac{T_a^2 }{ a} = \frac{ (1/2) N^2 }{  G_{a/2} } \inlaw \frac{  G_{1/2} }{ G_{a/2} }
\inlaw  H_{1/2, a/2}
\inlaw  B_{1/2,1/2} H_{1, a/2}
\inlaw  \sin^2 (\pi U') \left( \frac{1 }{ U^{\frac{2 }{ a}} } - 1 \right),
$$
where $U$ and $U'$ are i.i.d.\ uniform $[0,1]$ random variables.
This yields a one-liner for the Student-t distribution 
due to Bailey \cite{Bai}, also called the polar method for Student-t distribution:
\begin{align}\label{student}
T_a \inlaw \sqrt{a}\sin (2 \pi U') \sqrt{ \frac{1 }{ U^{\frac{2}{a}} } - 1 }.
\end{align}
See also Devroye \cite{D1} for variations on this polar method.
Earlier methods for the Student-t distribution include algorithms by Best \cite{Best}
and Ulrich \cite{U}.
Very simple special cases, obtainable by the inversion method, include the Cauchy law (obtained for $a=1$), for which we have $T_1 \inlaw \tan(\pi (U-1/2))$, and the $t_2$ law, for which we have
$$
T_2 \inlaw \frac{2U-1}{\sqrt{2U(1-U)}}
$$
(see, e.g. Jones \cite{LL}).

\section{Rejection from Student-t distribution}\label{rejection}

The obvious thing to try is to use the rejection method from the Student-t distribution,
for which many uniformly fast algorithms are known.
Using
$$
f(x) \le \frac{\gamma e^{s\pi/2}}{(1+x^2)^a},
$$
it suffices to generate i.i.d.\ pairs $(X,U)$, where $X = T_{2 a -1} / \sqrt{2a-1}$ is a random variable with
density proportional to $(1+x^2)^{-a}$ and $U$ is uniform on $[0,1]$,
until 
$$
U e^{s \pi/2} \le e^{s \arctan (X)},
$$
or, equivalently, to generate i.i.d.\ pairs $(X,E)$, where $E$ is standard
exponential, until 
$$
-E + s \pi/2 \le s \arctan (X).
$$

\begin{algorithm}[H]
\caption{PearsonIV generator by rejection from Student-t law}\label{Pearson1}
\begin{algorithmic}[1]
\Repeat 
\State let $E$ be an exponential random variable 
\State let $T_{2a-1}$ is a Student-t random variable with parameter $2a-1$
\State set $X \leftarrow T_{2a-1}/\sqrt{2a-1}$ 
\Until {$E \ge s(\pi/2 - \arctan (X))$} 
\State \textbf {return} $X$ 
\Comment{$X \inlaw P_{a,s}$ }
\end{algorithmic}
\end{algorithm}

As 
$$
{ \gamma e^{-s\pi/2} \over (1+x^2)^a} \le f(x) \le {\gamma e^{s\pi/2} \over (1+x^2)^a},
$$
it is easy to see that the expected number of iterations is 
at least $(1/2) e^{\pi s/2}$ (and at most $e^{\pi s}$),
which is uniformly bounded for all $a > 1/2$ and $ s < c$ for some fixed constant $ c$.
As soon as $s \ge 10$, or something of that order of magnitude, this 
simple method is unfeasible.


\section{The Pearson IV density}\label{pearson4}

The Pearson IV density \eqref{pearsonIV} is unimodal and has mean
$$
\mu = \frac{s}{2(a-1)}
$$
for $a > 1$, and variance
$$
\sigma^2 = \frac{s^2 + 4(a-1)^2}{4 (a-1)^2 (2a-3)}
$$
for $a > 3/2$.
It has a unique mode at
$$
m = \frac{s}{2a}.
$$
The log-density is
$$
\log f(x) = \log (\gamma) + s \arctan (x) -a \log( 1+x^2 ), x \in \RR.
$$
The derivatives of $\log f$ are
\begin{align}
 (\log f)'(x) &= \frac{s-2ax}{1+x^2}, \\
 (\log f)''(x) &= \frac{2ax^2 -2sx -2a }{(1+x^2)^2} .   
\end{align}
This shows that the Pearson IV density is log-concave on the interval defined by
$$
| x - m | \le D \isdef \sqrt{ 1+m^2 },
$$
and log-convex outside that interval.


\section{The arctan-mapped density}\label{arctan}

For $a \ge 1$, Exercise 1 on page 308 in Devroye \cite{D5} notes that $\arctan (P_{a,s})$ has a log-concave density
on $[-\pi/2, \pi/2]$ given by
\begin{equation}\label{atan}
h(y) =  
\begin{cases}
\gamma e^{sy} ( \cos^2 (y) )^{a-1} & \text{if} ~|y| \le \frac{\pi }{ 2}, \\
0 & \text{else.} 
\end{cases}
\end{equation}
The function $g = \log (h)$ has the following derivatives, all decreasing in $y$ on $[0,\pi/2]$:
\begin{align}\label{derivatives}
g' (y) &= s - 2(a-1) \tan (y) ,\\
g'' (y) &= - \frac{2 (a-1)} {\cos^2 (y)}, \\
g''' (y) &= - \frac{4 (a-1) \tan (y)} {\cos^2 (y)} ,\\
g'''' (y) &= - \frac{4 (a-1) ( 1 + 2 \sin^2 (y)) } {\cos^4 (y)}.
\end{align}
The modes of $h$ and $g$ occur at
$$
m = \arctan (\beta),
$$
where we set $\beta = s/(2(a-1))$, noting that for $a=1$, we have $m = \pi/2$.
The mean and variance $\Var \{ X \}$ of a random variable $X$
with density \eqref{atan} can be expressed as a function of the complex
digamma function and the trigamma function $\psi_1$, e.g.,
\begin{align}\label{variance}
\Var \{ X \} 
= \frac{1}{4} \left( \psi_1 (a+ is/2) + \psi_1 (a- is/2) \right)
\in \left[ \frac{2a}{s^2 + 4a^2} , \frac{2(a+1)}{s^2 + 4a^2}\right]
\end{align}
\cite{Barndorff1982}. 
For any random variable $X$ with a unimodal density $h$ with mode at $m$, Dharma\-dhikari and Joag-Dev \cite{Dharmadhikari1982}\cite{Dharmadhikari1988}\cite{Abu-Bakut2011} showed that
$$
\frac{1}{2e } \le h(m)\sqrt{\Var\{X\}}.
$$
For log-concave densities, Fradelizi \cite{Fradelizi1997} improved this:
\begin{align}\label{fradelizi}
\frac{1}{\sqrt{12} } \le h(m)\sqrt{\Var\{X\}} \le \frac{1}{\sqrt{2}}\,.
\end{align}
The upper bound improves over an earlier result by Ibragimov \cite{ibragimov1956unimodal}, who showed that $h(m)\sqrt{\Var\{X\}} \le 1$.
Therefore,
\begin{align}\label{Lbound}
h(m) 
\ge 
\sqrt{\frac{4a^2 + s^2}{24(a+1)}}.
\end{align}
Several rejection methods could be used at this point.
\begin{enumerate}[label=(\roman*)]
\item
By an inequality due to Devroye \cite{D6}, we have
\begin{align}\label{devroye}
h(x) \le \min \left( h(m),  h(m) \exp(1 - h(m)|x-m| ) \right) .
\end{align}
Using this would yield an algorithm taking an expected number of ite\-ra\-tions equal
to 4, but it would require access to the normalization constant $\gamma$.
\item
We apply the bound $h(m) \ge L$ in \eqref{devroye} to obtain
\begin{align}\label{devroye2}
h(x) &\le \min \left( h(m),  h(m) \exp(1 - L|x-m| ) \right).
\end{align}
This avoids computing $\gamma$.  The expected number of iterations
becomes 
$$
4 \frac{h(m)}{L}.
$$
Candidates for $L$ include $(\gamma^-/\gamma) h(m)$ (see \eqref{gammabounds} below) and \eqref{Lbound}, which is based on Fradelizi's inequality \cite{Fradelizi1997}).  The former replacement pushes the expected numbers of iterations up slightly to $4\gamma/ \gamma^- \le 4\gamma^+/\gamma^-$, which is uniformly bounded over all $a \ge 1$ and $s \in \RR$.
\item
A custom-designed lower bound for $h(m)$. 
\end{enumerate}

\bigskip    

\begin{figure}[H]
    \centering

\begin{tikzpicture}
\begin{axis}[
    width=12cm, height=8cm,
    samples=100,
    domain=-1.5708:1.5708,
    xlabel={$x$}, ylabel={$f(x)$},
    grid=major,
    legend pos=outer north east,
    legend style={font=\tiny}
]
\def\consts{{1.1641, 1.1016, 0.9346, 0.7137, 0.4932, 0.3107, 0.1798, 0.0964, 0.0482, 0.0226, 0.0101}}
\foreach \s in {-10,-9,...,10} {
    \pgfmathsetmacro{\absS}{abs(\s)}
    \pgfmathsetmacro{\c}{\consts[\absS]}  
    \addplot+ [mark=none, thick] {
        \c * exp(\s*x) * (cos(deg(x))^8)
    };
    \addlegendentryexpanded{s = \s}   
}
\end{axis}
\end{tikzpicture}
\caption{Normalized density $\gamma e^{sx} \cos^{2a-2}(x)$ for $a=5$ and various values of $s$.}
\end{figure}

\begin{remark}
    \textsc{the skewed cauchy family.}
When $a=1$, $h(y) = \gamma \exp(sy)$, so that a random variate with density $h$ simply is
\begin{align}\label{skewedcauchy}
W = 
\begin{cases}
\frac{1 }{ s} \log \left( e^{-\frac{\pi s }{ 2}} + U \left( e^{\frac{\pi s }{ 2}} - e^{-\frac{\pi s }{ 2}} \right) \right) & \text{if}~s > 0, \\
\pi (U-1/2) & \text{if}~s = 0, 
\end{cases}
\end{align}
where $U$ is uniform on $[0,1]$. Therefore, $P_{1,s} \inlaw \tan (W)$. 
We also rediscover the standard method
for generating Cauchy random variables: $P_{1,0} \inlaw \tan (\pi (U-1/2))$.
The family of distributions  $P_{1,s}$  will be called the skewed Cauchy family. 
\end{remark}

In the remainder of this section, we assume that $a \ge 1$. Note that
$$
h(m) 
= \gamma e^{s \arctan (\beta) } \left(\frac{1 }{ 1+ \beta^2} \right)^{a-1}
= \gamma \left( \frac{ e^{2 \beta \arctan (\beta)}}  { 1+ \beta^2} \right)^{a-1}.
$$
The universal method for log-concave densities
from Devroye (1984)\cite{D4}, adap\-ted here for use with  the inequality 
based on \eqref{devroye2} is given below

\begin{algorithm}[H]
\caption{Universal log-concave method: $a \ge 1$}\label{LC}
\begin{algorithmic}[1]
\State let $m$ be the location of the mode of the log-concave density $h$ \eqref{atan}
\State let $L$ be such that $h(m) \ge L$
\Repeat  
\State let $V$ be uniform on $[-2,2]$ 
\If {$V<-1$} 
\State replace $V$ by $-1 + \log(V+2)$ 
\ElsIf {$V>1$} 
\State replace $V$  by $1 - \log(V-1)$ 
\EndIf
\State $Y \gets m + V/L$ 
\State let $U$ be uniform on $[0,1]$ 
\Until {$U \min ( 1 ,  \exp (1-L|Y-m|) ) \le h(Y)/h(m)$} 
\State \textbf {return} $Y$ 
\Comment        {$Y$ has density $h$ }
\State {}
\Comment        {$\tan(Y) \inlaw P_{a,s}$ if $h$ is as in \eqref{atan}}
\end{algorithmic}
\end{algorithm}

\begin{remark}
    \textsc{adaptive methods.}  The method given here has a uniformly bounded
time and is useful when the parameters vary in an application. For fixed parameters, several adaptive methods make the method more efficient as more random variates
are generated. See, e.g., Gilks \cite{G1}, Gilks and Wild \cite{GW1, GW2} and Gilks, Best and Tan \cite{GBT}. $\square$
\end{remark}

\begin{remark}
    \textsc{references.}  Additional references on random variate generation for 
log-concave laws include H\"ormann, Leydold and Derflinger \cite{HLD}, Leydold and H\"ormann \cite{LH, LH2}
and Devroye \cite{D5}.  For an implementation of the algo\-rithm in this section, see Heinrich \cite{Hei}. $\square$
\end{remark}

\begin{remark}
\textsc{the universal algorithm without access to the norma\-li\-zation constant.}
With $L= h(m)$, we know that the expected number of iterations in algorithm \ref{LC}
is 4. In Lemma \ref{gamma} below, we show 
that 
$$
\gamma^- \le \gamma \le \gamma^+,
$$
where $\gamma^-$ and $\gamma^+$ are explicit functions of the parameters, $a$ and $s$.
Thus, setting
$$
\Delta = \frac{h(m)}{\gamma}
=   e^{sm} ( \cos^2 (m))^{a-1},
$$
we have $h(m) \ge L \isdef \gamma^- \Delta$. If this value of $L$ is used in algorithm {LC}, then
the expected number of iterations
is the  integral of the bounding functions, or
$$
4  \frac{\gamma }{ \gamma^-} \le
4  \frac{ \gamma^+ }{ \gamma^-} .
$$
With the choices given in Lemma \ref{gamma} below, we see that the expected number of iterations is
$$
\le 4 \times \left( 
\frac
{ 1 + \frac{3 }{ 2 \pi^2 \sqrt{a^2 + (s/2)^2}} }
{ 1 - \frac{3 }{ 2 \pi^2 \sqrt{a^2 + (s/2)^2} } } 
\right)^2
\times \sqrt{\frac{ (a+ 0.177)(a+0.677)  }{ (a+ 1/6)(a+2/3) }}.
$$
Uniformly over all $a \ge 1$ and $s \ge 0$, this does not exceed
$$
4 \times \left( \frac{ 2\pi^2 +3  }{ 2 \pi^2 - 3 } \right)^2 \times \sqrt{\frac{ 1.177 \times 1.677 \times 18  }{ 35 }}
\approx 5.34.
$$
Note, though, that as $a \to \infty$, the expected number of iterations tends to 4, since the
inequalities for the gamma function get tighter.
\end{remark}

\section{Rejection from an exponential for the arctan-mapped density}\label{gaussian}

For $a \ge 1$, we have $h(y) \le \gamma e^{sy}$  on $[-\pi/2,\pi/2]$. This leads directly to the following rejection algorithm:

\begin{algorithm}[H]
\caption{PearsonIV generator by rejection from the exponential density when $a \ge 1$.}\label{expo-rejection}
\begin{algorithmic}[1]
\Repeat 
\State let $V$ be uniform on $[0,1]$
\State let $Y \leftarrow \frac{1}{s} \log \left( V e^{s\pi/2}  + (1-V) e^{-s\pi/2} \right)$
\State let $U$ be uniform on $[0,1]$
\Until { $U   \le ( \cos (Y) )^{2a-2} $} 
\State \textbf {return} $X \leftarrow \tan(Y)$  
\Comment{$X \inlaw P_{a,s}$ }
\end{algorithmic}
\end{algorithm}

If we write $\gamma = \gamma (a,s)$ to make the dependence of the normalization constant on the
parameters explicit, then it is easily seen that the expected number of iterations of the 
algorithm \ref{expo-rejection} is
$$
\frac{\gamma (a,s)}{\gamma (1,s)},
$$
which, by Lemma \ref{gammabounds}, for fixed $a \ge 1$, increases in proportion to $s^{2a-2}$ as $s \uparrow \infty$. Thus, algorithm \ref{expo-rejection} has uniformly bounded time for $1 \le a \le a^*$,
$0 \le s \le s^*$. where $a^*$ and $s^*$ are small constants.  It could be useful for $1 \le a \le 3$, $0 \le s \le 3$.

\section{Gaussian domination for the arctan-mapped density}\label{gaussian}

We can derive upper bounds for the arctan-mapped density $h$ (see \eqref{atan}) based on
the derivatives of $g=\log h$ (see \eqref{derivatives}) and standard Taylor series bounds:
\begin{align}
g(y) 
&\le 
\begin{cases}
g(m) + \frac{(y-m)^2}{2} g''(m), & y \ge m , \cr
g(m) + \frac{(y-m)^2}{2} g''(m) - \frac{(y-m)^3}{6} g'''(m), & y \le m, \cr
\end{cases}  \\
&\le 
\begin{cases}
g(m) + \frac{(y-m)^2}{2} g''(m), & y \ge m , \cr
g(m) + \frac{(y-m)^2}{4} g''(m),  & m - D \le y \le m, \cr
\end{cases}\label{gaussian2}
\end{align}
whenever 
$$
\frac {D |g'''(m)|}{6} \le \frac{ |g''(m)|}{4}.
$$
This is satisfied if we pick $D$ such that $D \beta \le 3/4$.
For example, if we set $D=\pi$, then for all $m$ satisfying
\begin{align}\label{condition}
\tan (m) \le \frac{3}{4 \pi} \approx 0.2387324,
\end{align}
we have
\begin{align}\label{gaussian3}
h(y) \le 
h(m) e^{-\tau^2 \frac{(y-m)^2}{2}}, y \in \RR,
\end{align}
where
\begin{align*}
\tau \isdef \sqrt {|g''(m)|/2} = \sqrt{(a-1)(1+\beta^2)}.
\end{align*}
A condition equivalent to \eqref{condition}
is $|m| \le \arctan \left( \frac{3}{4\pi} \right) \approx 0.2344$. 
Algorithm  \ref{simplegaussian} uses rejection either from the Gaussian implied by \eqref{gaussian3} or rejection from the uniform density on $[-\pi/2,\pi/2]$.

\begin{lemma} 
The expected number of iterations taken by algorithm \ref{simplegaussian} is not more than
$$
\sqrt{2 \pi + \pi^2}  \approx 4.02,
$$
uniformly over all values of the parameters with $a \ge 1$ and $|m| \le \arctan \left( \frac{3}{4\pi} \right) \approx 0.2344$.
\end{lemma}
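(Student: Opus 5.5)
The plan is to bound the expected number of iterations by the product of $h(m)$ and the area of the normalized dominating curve, and then to bound $h(m)$ in terms of $\tau$ so that the product becomes parameter-free. Since algorithm \ref{simplegaussian} rejects either from the Gaussian bound \eqref{gaussian3} or from the constant bound $h(m)$ on $[-\pi/2,\pi/2]$, I take it to use whichever dominating curve has the smaller integral. I have not seen the algorithm, so this choice is an assumption. The Gaussian bound has integral $h(m)\sqrt{2\pi}/\tau$ and the uniform bound has integral $\pi h(m)$. Both are valid dominating functions: the first by \eqref{gaussian3}, which holds under the assumption $|m|\le\arctan(3/(4\pi))$, and the second by unimodality. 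The expected number of iterations is therefore
$$
N = h(m)\,\min\left(\pi, \frac{\sqrt{2\pi}}{\tau}\right).
$$

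The key step is an upper bound on $h(m)$ that avoids $\gamma$. By the upper half of Fradelizi's inequality \eqref{fradelizi}, which applies because $h$ is log-concave for $a\ge1$, we have $h(m)^2 \le 1/(2\Var\{X\})$. The lower bound in \eqref{variance} gives $\Var\{X\}\ge 2a/(s^2+4a^2)$. Combining the two,
$$
h(m)^2 \le \frac{s^2+4a^2}{4a} = a + \frac{s^2}{4a}.
$$
I then compare this with $\tau^2=(a-1)(1+\beta^2) = (a-1) + s^2/(4(a-1))$. Since $s^2/(4(a-1)) \ge s^2/(4a)$, we get $h(m)^2 \le 1+\tau^2$. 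When $a=1$ we have $\tau=0$, the Gaussian option is void, and the bound reads $h(m)^2\le 1+s^2/4$. In that case the condition on $m$ forces $\beta$ to be finite, so effectively $s=0$ and $h(m)^2\le 1$. This edge case should be checked separately, though it is consistent with the general bound.

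The conclusion follows from
$$
N^2 \le \min\left(\pi^2, \frac{2\pi}{\tau^2}\right)(1+\tau^2) \le \pi^2\cdot 1 + \frac{2\pi}{\tau^2}\cdot\tau^2 = \pi^2+2\pi.
$$
In the middle step I split $1+\tau^2$ into its two summands and bound the minimum by the first argument on the term $1$ and by the second argument on the term $\tau^2$. This gives $N\le\sqrt{2\pi+\pi^2}\approx 4.02$.

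I expect the main obstacle to be the bound on $h(m)$. It relies on the variance bracket \eqref{variance} holding for all $a\ge1$, and on recognizing that the Fradelizi bound $a+s^2/(4a)$ is dominated by $1+\tau^2$. The restriction on $m$ is used only to validate the Gaussian dominating curve.
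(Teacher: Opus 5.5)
Your proof is correct and uses the same ingredients as the paper's proof: the expected count $h(m)\min(\pi,\sqrt{2\pi}/\tau)$ for the better of the two envelopes (which is exactly what the algorithm's cut-off at $\tau=\sqrt{2/\pi}$ selects), and the bound $h(m)^2\le a+s^2/(4a)\le 1+\tau^2$ from Fradelizi's inequality and the lower end of \eqref{variance}. The only difference is your one-line estimate $\min(\pi^2,2\pi/\tau^2)(1+\tau^2)\le\pi^2+2\pi$ in place of the paper's case split at $\tau=\sqrt{2/\pi}$; it is cleaner and also avoids a slip in the paper's first case, where bounding $a-1$ and $s^2/(4(a-1))$ separately by $2/\pi$ gives $\pi\sqrt{1+4/\pi}=\sqrt{\pi^2+4\pi}\approx 4.74$ rather than the stated $\sqrt{\pi^2+4}$, whereas using $\tau^2\le 2/\pi$ directly gives the required $\sqrt{\pi^2+2\pi}$.
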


\begin{proof}
If we were to use \eqref{gaussian2}, then the  expected number of iterations before halting would be
$$ 
\frac{ \sqrt{2 \pi}  h(m) } { \tau  } .
$$
However, if we bound $h$ by $h(m)$ and use rejection from the uniform density on $[-\pi/2,\pi/2]$, then the expected number of iterations before halting
would be $\pi h(m)$.  Taking  the best of both leads to the cut-off
value at $\tau = \sqrt{2/\pi }$. 
Using \eqref{variance} and Fradelizi's inequality \eqref{fradelizi}, we have
\begin{align*}
h(m) \le \sqrt{ a + \frac{s^2}{4a}} .
\end{align*}
Assume first $\tau \le \sqrt{2/\pi }$. Then 
$a-1 \le 2/\pi$ and $s^2/(4(a-1)) \le 2/\pi$, which implies that
$$
\pi h(m) \le \pi \sqrt{1 + 4/\pi} = \sqrt{\pi^2 + 4}.
$$
If, on the other hand, $\tau \ge \sqrt{2/\pi }$, then
\begin{align*}
\frac{ \sqrt{2 \pi}  h(m) } { \tau  }
&\le
\frac{ \sqrt{2 \pi}  \sqrt{a +\frac{s^2}{4a} } } { \tau  }
\le
\frac{ \sqrt{2 \pi}  \sqrt{1 +(a-1) +\frac{s^2}{4(a-1)} } } { \tau  }\\
&=
\frac{ \sqrt{2 \pi}  \sqrt{1 + \tau^2 } } { \tau  }
=
\sqrt{2 \pi}  \sqrt{1 + \frac{1}{\tau^2} } 
\le 
\sqrt{2 \pi}  \sqrt{1 + \frac{\pi}{2} } 
=
\sqrt{2 \pi + \pi^2} .
\end{align*}
\end{proof}

\begin{figure}[H]
    \centering\label{pinkoo}
    \includegraphics[width=\linewidth]{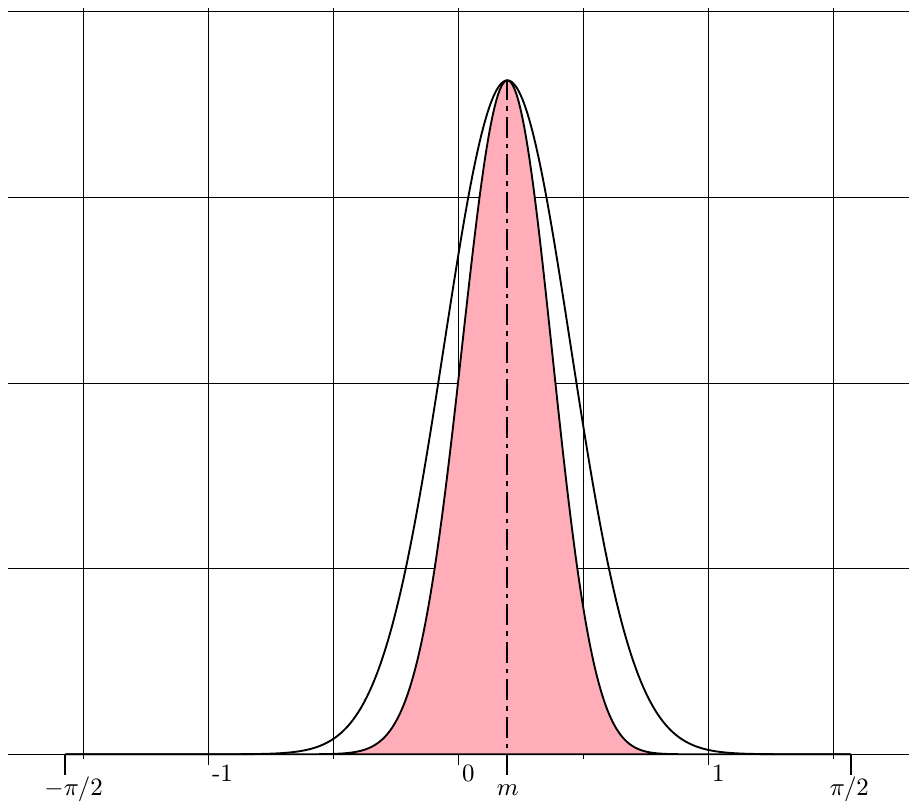}
    \caption{The arctan-mapped Pearson IV density is shown in pink, together with the normal envelopes for $s=6$, $a=16$.}
\end{figure}

\begin{algorithm}[H]
\caption{PearsonIV generator for $a > 1$ by rejection from the normal density when $\tan (m) \le 3/(4 \pi) \approx 0.2387324$, or equivalently, when $m \le  \arctan(3/(4 \pi)) \approx 0.2344$.}\label{simplegaussian}
\begin{algorithmic}[1]
\State let $\beta = s/(2(a-1))$,  $m = \arctan(\beta)$, $\tau = \sqrt{(a-1)(1+\beta^2 )}$
\If{$\tau \ge \sqrt{2/\pi}$}
\Repeat 
\State let $N$ be a standard Gaussian random variable 
\State let $U$ be uniform on $[0,1]$
\State set $Y \leftarrow  m + N/\tau$
\Until {$U h(m) e^{-\tau^2  \frac{(Y-m)^2}{2}}  \le h(Y)$} 
\State
\Comment{the acceptance condition is equivalent to $U e^{-N^2 /2} \le h(Y)/h(m)$}
\Else
\Repeat 
\State let $Y$ be uniform on $[-\pi/2, \pi/2]$ 
\State let $U$ be uniform on $[0,1]$
\Until {$U   \le h(Y)/h(m)$} 
\EndIf
\State \textbf {return} $X \leftarrow \tan(Y)$  
\Comment{$X \inlaw P_{a,s}$ }
\end{algorithmic}
\end{algorithm}
\section{Bounding the normalization constant}\label{gammabounds}

The gamma function is defined for complex $z=x+iy$ with $x> 0$ by Euler's integral
$$
\Gamma (z) = \int_0^\infty t^{z-1} e^{-t} \, dt.
$$
Explicit inequalities for Euler's gamma function with real argument are often tied to Stirling's
approximation (Stirling \cite{Stir}).  A prime example is Robbins's upper and lower bound (Robbins \cite{Ro}).
Olver et al.\ \cite{Ol} summarize most of the well-known bounds.

\begin{lemma}\label{batir} (Batir \cite{Bat}).
\begin{align}\label{batir1}
\sqrt{2e}  \left( \frac{ x + 1/2 }{ e} \right)^{x+1/2} \le \Gamma (1+x) \le \sqrt{2\pi} \left( \frac{ x + 1/2 }{ e} \right)^{x+1/2} , x > 0,
\end{align}
and
\begin{align}\label{batir2}
\sqrt{2\pi (x+ 1/6)}  \left( \frac{ x }{ e} \right)^{x}\le \Gamma (1+x) \le \sqrt{2\pi \left(x+ \frac{e^2 }{ 2\pi} -1 \right) } \left( \frac{ x }{ e} \right)^{x} , x \ge 1.
\end{align}
\end{lemma}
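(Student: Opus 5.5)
The plan is to prove both pairs of inequalities by the same monotonicity device. For each one I would form an auxiliary function whose monotonicity pins $\Gamma(1+x)$ between its value at the left endpoint and its limit at infinity. The limit at infinity comes from Stirling's formula $\Gamma(1+x)\sim\sqrt{2\pi x}\,(x/e)^x$ and its refinement $\Gamma(1+x)^2 = 2\pi x\,(x/e)^{2x}\bigl(1+\tfrac{1}{6x}+O(x^{-2})\bigr)$. The monotonicity comes from sharp two-sided bounds on the digamma function $\psi=\Gamma'/\Gamma$ or on the trigamma function $\psi_1$.

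For \eqref{batir1}, set
$$
\phi(x)=\log\Gamma(1+x)-\Bigl(x+\tfrac12\Bigr)\log\frac{x+1/2}{e}, \qquad x\ge 0 .
$$
Then $\phi'(x)=\psi(1+x)-\log(x+1/2)$. I would show $\phi'>0$ using the classical inequality $\psi(1+x)>\log(x+1/2)$. One way to get it is to note that $\psi(1+x)-\log(x+1/2)\to 0$ as $x\to\infty$ and that its derivative $\psi_1(1+x)-1/(x+1/2)$ is negative. That negativity follows from the convexity estimate $\psi_1(1+x)=\sum_{k\ge1}(x+k)^{-2}<\sum_{k\ge1}\int_{x+k-1/2}^{x+k+1/2}t^{-2}\,dt=1/(x+1/2)$, since $t\mapsto t^{-2}$ is convex (midpoint rule underestimates). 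So $\phi$ is increasing. Its value at $0$ is $\tfrac12\log(2e)$, and by Stirling its limit at infinity is $\tfrac12\log(2\pi)$, because $(x+1/2)^{x+1/2}e^{-x-1/2}\sim x^{x+1/2}e^{-x}$. Exponentiating gives \eqref{batir1}; the strict inequality for $x>0$ is harmless.

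For \eqref{batir2}, set
$$
F(x)=\frac{\Gamma(1+x)^2}{2\pi\,(x/e)^{2x}}-x, \qquad x\ge 1 .
$$
The claim is exactly $1/6\le F(x)\le F(1)$, since $F(1)=e^2/(2\pi)-1$. The refined Stirling expansion gives $F(x)\to 1/6$. It therefore suffices to show that $F$ is decreasing on $[1,\infty)$. Differentiating gives
$$
F'(x)=2\,(x+F(x))\bigl(\psi(1+x)-\log x\bigr)-1 ,
$$
so I need $(x+F(x))\cdot 2(\psi(1+x)-\log x)\le 1$. This is the main obstacle, because it couples $F$ back into its own derivative.

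I would try to break this circularity by a two-stage argument. First, bound $F$ crudely from above using the upper half of \eqref{batir1}, which gives an explicit function slightly exceeding $x+1/6$. Second, bound $\psi(1+x)-\log x$ from above using the standard asymptotic-type inequality $\psi(1+x)-\log x<\frac1{2x}-\frac1{12x^2}+\frac1{120x^4}$. With both explicit bounds in hand, $F'\le0$ reduces to an elementary inequality in $x$, to be verified for $x\ge 1$. If the crude bound on $F$ is too weak near $x=1$, the fallback is to study instead $\log(x+F(x))=2\log\Gamma(1+x)-2x\log x+2x-\log(2\pi)$ and use convexity or complete-monotonicity properties of $\psi_1(1+x)-1/x+1/(2x^2)$. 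This is essentially the route taken by Batir \cite{Bat}, and at worst one may simply invoke that reference.
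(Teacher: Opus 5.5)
The paper gives no proof of this lemma; it just quotes Batir \cite{Bat}. Your argument for \eqref{batir1} is a complete, self-contained proof. The midpoint--convexity estimate $\psi_1(1+x)<1/(x+\frac12)$, together with $\psi(1+x)-\log(x+\frac12)\to0$, gives $\phi'>0$, and your endpoint values $\frac12\log(2e)$ and $\frac12\log(2\pi)$ are correct. For \eqref{batir2}, three parts of your reduction are also correct: the reduction to ``$F$ decreases on $[1,\infty)$'', the values $F(1)=e^2/(2\pi)-1$ and $F(\infty)=\frac16$, and the identity $F'=2(x+F)(\psi(1+x)-\log x)-1$.

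The gap is the proof that $F'\le0$. Your two-stage argument cannot work, and the failure is not confined to a neighbourhood of $x=1$. Put $W=x+F$ and $D=2(\psi(1+x)-\log x)=\frac1x-\frac1{6x^2}+O(x^{-4})$. The upper half of \eqref{batir1} gives $W\le(x+\frac12)(1+\frac1{2x})^{2x}/e=x+\frac14+O(1/x)$. The product of your two explicit bounds is therefore $1+\frac1{12x}+O(x^{-2})>1$ for all large $x$, and at $x=1$ it equals $\frac{27}{8e}\cdot\frac{17}{20}\approx1.055$. The obstruction is structural. In truth $F=\frac16+\frac1{72x}+O(x^{-2})$ and $F'=-\frac1{72x^2}+O(x^{-3})$, so $F'\le0$ is a second-order statement. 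Any bound $W\le x+\frac16+\frac cx$ fed into $WD\le1$ must have the constant $\frac16$ exactly and $c<\frac1{36}$; a bound whose constant is $\frac14$ is hopeless. Your fallback is too vague to check. If it means using $\psi_1(1+x)-\frac1x+\frac1{2x^2}>0$, that only says $\log(W/x)$ is convex. Since $\log(W/x)\to0$, it is then nonincreasing, which yields $F'\le F/x$, not $F'\le0$.

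What is missing is a way to decouple $F$ from its own derivative. One clean option: since $F'=W'-1$ and $W'\to1$, it suffices that $W$ is convex on $[1,\infty)$. Because $W''=W\,(4u^2+2u')$ with $u=\psi(1+x)-\log x$, this is the pure polygamma inequality $2u^2+u'\ge0$, with no $F$ in it. The standard bounds $u>\frac1{2x}-\frac1{12x^2}$ and $u'>-\frac1{2x^2}+\frac1{6x^3}-\frac1{30x^5}$ give $2u^2+u'>\frac1{72x^4}-\frac1{30x^5}$, which settles $x\ge\frac{12}5$. The interval $[1,\frac{12}5]$ needs sharper estimates, e.g.\ after shifting via $\psi(1+x)=\psi(2+x)-\frac1{1+x}$. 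Some such check is unavoidable, since $2u^2+u'<0$ at $x=\frac12$. A second option is the Binet--Robbins bound $\Gamma(1+x)\le\sqrt{2\pi x}\,(x/e)^xe^{1/(12x)}$, i.e.\ $W\le xe^{1/(6x)}$, which is accurate enough asymptotically; but its margin at $x=1$ is only about $10^{-3}$. Invoking Batir, as the paper does, is legitimate there, but then \eqref{batir2} is cited rather than proved.
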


\begin{lemma}\label{boyd} (Boyd \cite{Boyd}; see also (5.11.11) in Olver et al. \cite{Ol}).
For $\Gamma(z)$ with $z=x+iy$, $x > 0$, 
\begin{align}\label{boyd1}
\left| \Gamma (z) \right| 
= \left| \sqrt{\frac{2 \pi }{ z} }\left( \frac{ z }{ e } \right)^{z} \right| 
\times \left| 1 + R(z) \right|,
\end{align}
where
\begin{align}\label{boyd2}
| R(z) | \le \frac{3 }{ 2 \pi^2 |z|}.
\end{align}
\end{lemma}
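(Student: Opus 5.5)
This is a cited result, so the plan is to reconstruct Boyd's argument in outline rather than derive anything new. Throughout, $z=x+iy$ with $x>0$, and the aim is to control
$$
\Gamma^*(z) \isdef \frac{\Gamma(z)}{\sqrt{2\pi/z}\,(z/e)^z} = 1 + R(z).
$$

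\textbf{Step 1 (Binet's representation).} I would start from
$$
\log \Gamma(z) = \left(z-\tfrac12\right)\log z - z + \tfrac12 \log(2\pi) + \mu(z), \qquad \mu(z) = \int_0^\infty \left( \frac{1}{e^t-1} - \frac1t + \frac12 \right) \frac{e^{-zt}}{t}\, dt,
$$
which is valid for $\Re z>0$ and gives $1+R(z)=e^{\mu(z)}$. This form matches the normalization $\sqrt{2\pi/z}\,(z/e)^z$ used in the statement.

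\textbf{Step 2 (a crude bound, and why it is not enough).} Expand the kernel using the partial fractions
$$
\frac{1}{e^t-1}-\frac1t+\frac12 = \sum_{k\ge 1} \frac{2t}{t^2+4\pi^2k^2}.
$$
Integrating term by term and rotating the contour gives the classical bound $|\mu(z)| \le \sec^2(\theta/2)/(12|z|)$ with $\theta=\arg z$, hence $|\mu(z)|\le 1/(6|z|)$ when $|\theta|<\pi/2$. The step $|e^{\mu}-1|\le |\mu| e^{|\mu|}$ then gives a bound of the right form $C/|z|$. However, the constant comes out slightly worse than $3/(2\pi^2)\approx 0.152$. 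So this route only serves as a sanity check and as the large-$|z|$ regime.

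\textbf{Step 3 (the main obstacle: Boyd's resurgence formula).} To get the sharp constant, I would follow Boyd's steepest-descent approach. Write
$$
\Gamma(z) = z^z e^{-z}\int_0^\infty e^{-z(u-1-\log u)}\,\frac{du}{u},
$$
and change variables along the steepest-descent path through the saddle $u=1$. The remainder $R(z)$ then becomes an integral whose integrand contains $\Gamma^*$ evaluated at the adjacent saddles $2\pi i k$, for $k \ne 0$, of the Borel-transformed integrand. Concretely, I would aim for
$$
R(z) = \frac{1}{2\pi i}\sum_{k\ne 0} \int_0^\infty \frac{e^{-v}\, v^{-1/2}\,\Gamma^*(\cdots)}{\cdots}\,dv
$$
in Boyd's form, with $N=1$ truncation. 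Then I would bound it for $|\arg z|\le \pi/2$ as follows:
\begin{itemize}
\item bound $|\Gamma^*|$ on the positive axis by $1$ (a real Stirling/Robbins-type fact);
\item bound the denominators, which look like $|1 + v/(2\pi i k z)|$, from below using $\Re z>0$;
\item sum over $k$ using $\zeta(2)=\pi^2/6$.
\end{itemize}
The constant $3/(2\pi^2) = \frac{1}{4\pi^2}\cdot 2\zeta(2)\cdot\frac{9}{\pi^2}$-type bookkeeping is where I expect the difficulty. The delicate part is the uniformity as $\arg z\to\pm\pi/2$: there the Stokes line is approached and the denominators can only be bounded below by a constant factor, not by $1$. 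I would handle this by the $\sec$-type estimate specialized at the boundary angle. After that, I would verify numerically that the resulting constant is at most $3/(2\pi^2)$.

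If the exact constant proves elusive, the fallback is Step 2 combined with a direct check for small $|z|$. For the application in the paper, namely $\gamma^-\le\gamma\le\gamma^+$ with $|z|\ge a\ge 1$, any bound of the form $C/|z|$ with $C<1$ suffices for uniform boundedness. Only the numerical value $5.34$ depends on the precise constant.
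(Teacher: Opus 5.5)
The paper gives no argument for this lemma; it is quoted from Boyd (1994) via DLMF (5.11.11). Your Steps 1 and 2 are correct. Step 3, however, is the only part that could deliver the constant $3/(2\pi^2)$, and it is written with placeholders; the ingredients you list would not work.

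First, $\Gamma^*$ is not bounded by $1$ on the positive axis. Binet's kernel is positive, so $\mu(x)>0$ and $\Gamma^*(x)>1$ for $x>0$ (e.g.\ $\Gamma^*(1)=e/\sqrt{2\pi}\approx1.084$), and $\Gamma^*(x)\to\infty$ as $x\downarrow0$. More to the point, the singulants are $2\pi i k$, so a Berry--Howls/Boyd resurgence formula evaluates $\Gamma^*$ at the imaginary points $v/(2\pi i k)$. There $|\Gamma^*(it)|=(1-e^{-2\pi|t|})^{-1/2}$, which is $\ge1$ and of order $(2\pi|t|)^{-1/2}$ as $t\to0$. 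At truncation $N=1$ this makes the $k$-th term of size about $|k|^{-1}\cdot|k|^{1/2}$. A term-by-term bound of the sum over $k$ therefore diverges, and no $\zeta(2)$ is available at this order.

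Second, your bookkeeping is wrong: $\frac{1}{4\pi^2}\cdot2\zeta(2)\cdot\frac{9}{\pi^2}=\frac{3}{4\pi^2}\approx0.076$, not $3/(2\pi^2)$. Moreover, no valid constant can be below $1/12$, since $R(x)\sim1/(12x)$ as $x\to\infty$.

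Third, a $\sec(\mathrm{ph}\,z)$ estimate cannot be ``specialized at the boundary angle'', because $\sec(\mathrm{ph}\,z)\to\infty$ as $\mathrm{ph}\,z\to\pm\pi/2$. Uniformity up to the Stokes lines needs a genuinely separate estimate of the integral whose denominator nearly vanishes. In the bound quoted in DLMF (5.11.11), this is the $2\sqrt{K}$ alternative in $\min(\sec(\mathrm{ph}\,z),2\sqrt{K})$. That is the missing idea, and it is not a routine numerical verification.

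The fallback does not rescue the statement either. Step 2 rigorously gives $|R(z)|\le e^{1/(6|z|)}-1\le(e^{1/6}-1)/|z|\approx0.181/|z|$ for $|z|\ge1$. But its constant tends to $1/6>3/(2\pi^2)$ as $\mathrm{ph}\,z\to\pm\pi/2$, even for large $|z|$. So a check restricted to small $|z|$ cannot recover the stated constant, and a numerical ``direct check'' over part of the half-plane is not a proof anyway.

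What you actually obtain is a different, weaker lemma on $|z|\ge1$. You are right that this suffices for the paper's application, where $|z|=\sqrt{a^2+s^2/4}\ge a\ge1$, at the cost of a worse numerical constant in the final iteration bound. As a proof of the lemma as stated, with constant $3/(2\pi^2)$ on the whole half-plane $x>0$, the proposal has a genuine gap: everything rests on Boyd's Stokes-line analysis, which your sketch does not reproduce.
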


\begin{lemma}\label{gamma}
Let $a \ge 1$ and $s \ge 0$. We have $\gamma^- \le \gamma \le \gamma^+$, where
\begin{align}\label{gammabounds}
\gamma^* 
=& 
\frac{
(a-1/2) \left( 1 + (s/2a)^2  \right)^{a-1/2} e^{-s \arctan (s/2a) } 
}{
\sqrt{\pi/e} \left(  1+1/2a  \right)^a  \sqrt{a} } , \\
\gamma^+ &= \gamma^* \times \frac{ \left (1 + \frac{3 }{ 2 \pi^2 \sqrt{a^2 + (s/2)^2 }} \right)^2  }{ \sqrt{ 1 + \frac{1 }{ 6a}}
\sqrt{ 1 + \frac{1 }{ 6(a+1/2)}} } ,\\
\gamma^- &= \gamma^* \times \frac{ \left (1 - \frac{3 }{ 2 \pi^2 \sqrt{a^2 + (s/2)^2 }} \right)^2  }
{ \sqrt{ 1 + \frac{0.177 }{ a}}  \sqrt{ 1 + \frac{0.177 }{ a+1/2}} }. 
\end{align}
\end{lemma}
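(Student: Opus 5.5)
The plan is to evaluate $\gamma$ through its first representation
$$
\gamma=\frac{|\Gamma(a-is/2)|^2}{\Gamma(a)\,\Gamma(a-1/2)\,\sqrt{\pi}},
$$
bounding the numerator with Boyd's Lemma \ref{boyd} and the denominator with Batir's inequality \eqref{batir2}. Every factor in $\gamma^*$, $\gamma^+$ and $\gamma^-$ should match one of the terms these two estimates produce, after an exact algebraic simplification.

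\emph{Numerator.} Put $z=a-is/2$, so that $\Re z=a>0$ and Lemma \ref{boyd} applies. Write $r=|z|=\sqrt{a^2+(s/2)^2}$ and $z=re^{i\theta}$ with $\theta=-\arctan(s/2a)$. Then
$$
\Re(z\log z)=a\log r-(-s/2)\theta=a\log r-\tfrac{s}{2}\arctan(s/2a).
$$
Hence $|(z/e)^z|=r^a e^{-a}e^{-(s/2)\arctan(s/2a)}$ and $|\sqrt{2\pi/z}|=\sqrt{2\pi/r}$. Squaring gives
$$
|\Gamma(z)|^2=2\pi\, r^{2a-1}e^{-2a}e^{-s\arctan(s/2a)}\,|1+R(z)|^2 .
$$
Here $r^{2a-1}=a^{2a-1}\bigl(1+(s/2a)^2\bigr)^{a-1/2}$. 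By \eqref{boyd2}, $(1-\epsilon)^2\le|1+R|^2\le(1+\epsilon)^2$ with $\epsilon=3/(2\pi^2 r)$. Since $r\ge a\ge1$, we have $\epsilon<1$, so the lower factor is positive and squaring preserves the inequality. This accounts for the arctan exponential, the $(1+(s/2a)^2)^{a-1/2}$ factor and the $(1\pm\epsilon)^2$ factors in the statement.

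\emph{Denominator.} I would rewrite $\Gamma(a)=\Gamma(1+a)/a$ and $\Gamma(a-1/2)=\Gamma(1+(a+1/2))/\bigl((a+1/2)(a-1/2)\bigr)$. This produces the factor $(a-1/2)$ in the numerator of $\gamma^*$, and both arguments $x=a$ and $x=a+1/2$ are $\ge1$, so \eqref{batir2} applies to each. It gives
$$
\Gamma(1+x)=\sqrt{2\pi(x+c)}\,(x/e)^x,\qquad c\in\bigl[1/6,\ e^2/(2\pi)-1\bigr],
$$
and $e^2/(2\pi)-1\approx0.1760<0.177$. Using $c=1/6$ (a lower bound on the denominator) yields $\gamma^+$, with factors $\sqrt{1+1/(6a)}\sqrt{1+1/(6(a+1/2))}$. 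Using $c=0.177$ yields $\gamma^-$.

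\emph{Assembly.} This is routine but is the main place to be careful. The product of the denominator approximations is
$$
\frac{2\pi\sqrt{a(a+1/2)}\,a^{a}(a+1/2)^{a+1/2}e^{-2a-1/2}}{a\,(a+1/2)(a-1/2)} .
$$
Dividing $2\pi a^{2a-1}e^{-2a}$ by this, the $2\pi$ and $e^{-2a}$ cancel, and $e^{1/2}/\sqrt{\pi}$ combines into $1/\sqrt{\pi/e}$. The powers of $a$ and $a+1/2$ reduce to $a^{a-1/2}/(a+1/2)^{a}=1/\bigl((1+1/2a)^a\sqrt a\bigr)$. What remains is exactly $\gamma^*$ times the stated correction factors. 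The main obstacle is bookkeeping rather than ideas: tracking the sign of $\theta$ so that the exponential comes out as $e^{-s\arctan(s/2a)}$, and checking that the powers of $a$ and $a+1/2$ collapse as claimed.
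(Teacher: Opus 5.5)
Your proposal is correct and is the paper's proof. The paper likewise rewrites $\gamma$ as $(a^2-1/4)\,a\,|\Gamma(a-is/2)|^2/(\Gamma(a+1)\Gamma(a+3/2)\sqrt{\pi})$, applies Boyd's bound to the numerator and Batir's inequality \eqref{batir2} at $x=a$ and $x=a+1/2$ to the denominator, and uses $e^2/(2\pi)-1<0.177$; you fill in the details the paper leaves implicit, namely the sign of the argument, the positivity of $1-\epsilon$, and the collapse of the powers of $a$ and $a+1/2$.
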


\begin{proof}
Combine Batir's inequality \eqref{batir2} with Boyd's bounds \eqref{boyd1} \eqref{boyd2} after re\-wri\-ting $\gamma$ as follows: 
$$
\gamma 
\isdef  \frac{ \left|  \Gamma (a - is/2) \right|^2  }{ \Gamma (a) \Gamma (a -1/2) \Gamma (1/2) }
=  \frac{ (a^2 - 1/4) a \left| \Gamma (a - is/2) \right|^2  }{ \Gamma (a+1) \Gamma (a+3/2) \sqrt{\pi} },
$$
and noting that
$e^2/(2\pi) -1 < 0.177$ and
\begin{equation}
\left|  \sqrt{\frac{2 \pi }{ z}} \left( \frac{ z }{ e } \right)^{z} \right|  
= \sqrt{ \frac{ 2 \pi }{ \sqrt{x^2 + y^2} }} \left( \frac{ \sqrt{x^2 + y^2} }{ e } \right)^{x} e^{-y \arctan (y/x) }.
\end{equation}
\end{proof}


\section{Symmetrization for parameter values $a \le 1$}\label{symmetric}

Finally, we develop a generator that is uniformly fast for $a \in (1/2,1], s \in \RR$.
As $P_{a,s} \inlaw -P_{a, -s}$, symmetrization may be helpful.
Define the symmetric density
$$
g(x) = \frac{f(x) + f(-x) }{ 2} = \frac{\gamma  \cosh ( s \arctan (x) ) }{ (1+x^2)^a }.
$$
Setting $Y=\arctan (X)$, where $X$ has density $g$ on $\RR$ yields the following 
symmetric density on $(-\pi/2, \pi/2)$:
$$
h(y) =  \gamma \cosh (sy) ( \cos^2 (y) )^{a-1}.
$$
Consider next the random variable $Z = \pi/2 - |Y|$
with density
\begin{equation}\label{eta}
\eta (z) =  2\gamma \cosh (s (\pi/2 -z)) ( \sin (z) )^{2(a-1)}, 0 \le z \le \pi/2.
\end{equation}

For $a \in (1/2,1]$,  the density \eqref{eta}
is decreasing and has an infinite peak at the origin unless $a=1$.
Most of its mass is near zero, and thus, we will attempt rejection using the bound
\begin{align*} 
\eta (z) &\le 2\gamma e^{s\pi/2} e^{-sz} (2z/\pi)^{2(a-1)}  \\
&\le 2\gamma (2/\pi)^{2(a-1)} e^{s\pi/2} e^{-sz} z^{2(a-1)} ,  0 < z \le \pi/2.
\end{align*}

Assume first that $s \ge 1$.
Introduce a uniform $[0,1]$ random variate $U$ and an independent  gamma $(2a-1)$
random variate $G_{2a-1}$.
We apply rejection from the gamma distribution by generating
independent pairs $(Z,U) = (G_{2a-1}/s , U)$ until for the first time,
$Z \le \pi/2$ and
$$
U  (2Z/\pi)^{2(a-1)} \le (\sin (Z))^{2(a-1)},
$$
or, equivalently,
$$
U \le \left( \frac{ 2Z }{ \pi \sin (Z)} \right)^{2(1-a)}.
$$
The returned random variable $Z$ has density $\eta$ given in \eqref{eta}.
The probability of acceptance is thus at least

\begin{align*} 
\EXP &\left\{ \left( \frac{ 2G_{2a-1}/s }{ \pi \sin (G_{2a-1}/s)} \right)^{2(1-a)}  \I_{G_{2a-1}/s \le \pi/2} \right\} \\
&\ge \EXP \left\{ \frac{ 2G_{2a-1}/s }{ \pi \sin (G_{2a-1}/s)}  \I_{G_{2a-1}/s \le \pi/2} \right\} \\
&\ge \frac{2 }{ \pi}  \PROB \left\{ G_{2a-1}/s \le \pi/2 \right\}, 
\end{align*}
where $\I$ is the indicator function. By Markov's inequality, the probability in this expression is at least
$$
1 - \frac{2 \EXP \{ \frac{G_{2a-1} }{ s} \} }{  \pi} = 1 - \frac{2 (2a-1) }{ s \pi}
\ge 1 - \frac{2 }{ s \pi} 
\ge 1 - \frac{2 }{ \pi} 
$$
uniformly for all $s \ge 1, a \le 1$. In this range, the rejection algorithm's expected number of iterations is at most 
$$
\frac{ \pi^2 }{ 2\pi - 4 }.
$$
Having generated $Z$ with density \eqref{eta}, we need to set 
$X = \tan ( S(\pi/2 -Z) )$, where $S$ is a random sign to obtain 
a random variate with the symmetrized Student-t density $g$.
Finally, a random variate with the Pearson IV density $f$ is obtained as

\begin{equation*}
\begin{cases}
X & \text{with probability } \frac{f(X) }{ f(X)+f(-X)} 
= \frac{ e^{s \arctan (X)}  }{ e^{s \arctan (X)} + e^{-s \arctan (X)}},  \\
-X & \text{else.} 
\end{cases}
\end{equation*}

Next, assume that $0 \le s \le 1$. Then
$$
\eta (z) \le 2\gamma (2/\pi)^{2(a-1)} e^{s\pi/2} z^{2(a-1)} ,  0 < z \le \pi/2.
$$
Thus, we can generate random pairs 
$$
(Z,U) = \left( V^{\frac{1}{2a-1}} \frac{\pi}{2} , U  \right)
$$
where $U, V$ are independent uniform random variates, until for the first time
$$
U \le e^{-s Z}.
$$
The random variate $Z$ has density \eqref{eta}.
The expected number of iterations is
$$
\EXP \{ e^{sZ} \} \le e^{s\pi/2} \le e^{\pi/2}.
$$

We combine the algorithms below.

\begin{algorithm}[H]
\caption{PearsonIV generator for parameter $a \in (1/2, 1].$}\label{small-a}
\begin{algorithmic}[1]
\If {$s \ge 1$}
\Repeat
\State generate $Z = G_{2a-1}/s$ and $U$ uniformly on $[0,1]$ 
\Until {$Z < \pi/2$ and $U \le \left( \frac{ 2Z }{ \pi \sin (Z)} \right)^{2(1-a)}$} 
\Else
\Repeat
\State generate $V$ and $U$ uniformly on $[0,1]$ 
\State set $Z \leftarrow  V^{\frac{1}{2a-1}} \frac{\pi}{2}$
\Until {$U \le e^{-sZ}$} 
\EndIf
\State $Y \gets  S(\pi/2 -Z) $, where $S$ is a random sign 
\State $X \gets \tan ( Y ) $ 
\State with probability $e^{-sY}/(e^{-sY} + e^{sY})$, $X \gets -X$ 
\State \textbf{return} $X$ 
\Comment {$X \inlaw P_{a,s}$}
\end{algorithmic}
\end{algorithm}

\begin{remark}
\textsc{gamma random variates.}
For uniformly fast gamma random variates, we refer to the surveys
in Devroye \cite{D5} and Luengo \cite{Lu}. In terms of the rejection constant,
the method of Marsaglia and Tsang \cite{Marsaglia+Tsang}
is highly recommended. Many simulation studies confirm
that the method of Schmeiser and Lal \cite{SL} is quite
competitive if the gamma parameter is at least one. Xi, Tan and Liu
\cite{Xi} suggested generating $\log (G_a)$ instead. As $\log (G_a)$
has a log-concave density for all values of $a > 0$, a uniformly
fast generator is quite easily obtained either by the universal
method of Devroye \cite{D4} or a specialized algorithm as developed, e.g.,
in Devroye \cite{D8}.
For algorithm \ref{small-a}, with a gamma parameter less than one, we recommend the one-liner recently developed by Greaves \cite{greaves2026extended}. $\square$
\end{remark}

\section{Putting things together for the Pearson IV distribution.}

We conclude by 
providing an overview of the methods developed above, which are all uniformly fast over the respective ranges of the parameters specified below. Algorithms \ref{LC} and \ref{small-a}, taken together, cover the entire parameter space.

\begin{enumerate}
\item[(i)] For $a \ge 1$ and all $s$, one can use the universal log-concave generator (algorithm \ref{LC}).
\item[(ii)] For $|s| \le 5$ and all $a$, one can use rejection from the Student-t density (algorithm \ref{Pearson1}). 
\item[(iii)] When both $|s|\le 3$ and $1 \le a \le 3$, one can use rejection from an exponential distribution (algorithm \ref{expo-rejection}). 
\item[(iv)] For $a >1$ and $|s| \le 3 (a-1)/ (2 \pi)$, one can use a simple rejection algorithm based on a normal envelope (algorithm \ref{simplegaussian}).
\item[(v)] For $a \in (1/2, 1]$, one can use rejection from the gamma density after symmetrizing the Pearson distribution (algorithm \ref{small-a}).
\item[(vi)] For $a = 1$, the skewed Cauchy density, there is a simple one-liner \eqref{skewedcauchy}.
\item[(vii)] For $s = 0$, the Student-t law, there is a simple one-liner \eqref{student}.
\end{enumerate}

\section{Simulations.}

We verified by simulation that all proposed algorithms generate variates consistent with the Pearson~IV distribution. All timings were obtained on an Intel Xeon Gold 6234 CPU @ 3.30\, GHz (64-bit, x86-64 GNU/Linux) and are reported in microseconds per variate, averaged over $10^5$ samples. No constants were precomputed for any of the algorithms.
For calibration, generating an exponential variate via $\ln(1/U)$, where $U \inlaw \mathrm{Uniform}(0,1)$, required $0.17\,\mu\text{s}$. A Student-$t$ variate using Bailey's method required $0.74\,\mu\text{s}$, while the skewed Cauchy method (valid for $P_{1,s}$, any $s$) required $2.11\,\mu\text{s}$ per variate.
Algorithm~5, for $a \in (1/2,1]$, achieved timings between $5.3\,\mu\text{s}$ and $6.3\,\mu\text{s}$ across all admissible $(a,s)$.
The table below compares Algorithms~1--4 for various parameter combi\-nations $(a,s)$, where Algorithm~4 uses $L$ as in \eqref{Lbound}. The universal Algorithm~2 demonstrates stable performance across the entire parameter space, although it is generally slower than Algo\-rithms~3 and~4, which are applicable only on restricted subsets of the parameter domain.

\begin{table}[H]
\centering
\renewcommand{\arraystretch}{2} 
\begin{tabular}{c|c|c|c|}
& $a=1$ & $a=3$ & $a=9$ \\
\hline
\makecell{$s=1$ } &
\makecell{5.9 \\ \textcolor{RoyalPurple}{10.9} \\ \textcolor{Magenta}{2.8} }& 
\makecell{7.7 \\ \textcolor{RoyalPurple}{14} \\ \textcolor{Magenta}{6.3} \\ \textcolor{BurntOrange}{7.6} } & 
\makecell{8.1 \\ \textcolor{RoyalPurple}{15.9} \\ \textcolor{Magenta}{11.4} \\ \textcolor{BurntOrange}{7.5} } \\
\hline
\makecell{$s=3$ } &
\makecell{16.1 \\ \textcolor{RoyalPurple}{23.1} \\ \textcolor{Magenta}{2.8} }& 
\makecell{79.6 \\ \textcolor{RoyalPurple}{12} \\ \textcolor{Magenta}{20} \\ \textcolor{BurntOrange}{7.5} } & 
\makecell{143.1 \\ \textcolor{RoyalPurple}{15} \\   \textcolor{BurntOrange}{7.4} } \\
\hline
\makecell{$s=9$ } &
\makecell{ \textcolor{RoyalPurple}{18.1} \\ \textcolor{Magenta}{2.8} }& 
\makecell{ \textcolor{RoyalPurple}{7.1}   } & 
\makecell{ \textcolor{RoyalPurple}{9.6}  } \\
\hline
\end{tabular}
\caption{Timing in microseconds per random variate. The color scheme: algorithm 1, \textcolor{RoyalPurple}{algorithm 2}, \textcolor{Magenta}{algorithm 3}, \textcolor{BurntOrange}{algorithm 4}.}
\end{table}

\section{A statistical model involving the Pearson IV family}

In this section, we describe a Bayesian statistical model involving the Pearson IV family of distributions. For convenience, we define the following notation.
If $X$ is a Pearson IV random variable with parameters $a = m_0/2 + 1$ and $s = m_0\, \mu_0$, with density
$$
f_X(x) = K(\mu_0, m_0) \exp\{m_0\, \mu_0 \arctan(x) - (m_0/2 + 1) \log(x^2 + 1)\},
\ x \in \mathbb{R},
$$
having normalizing constant
$$
K(\mu_0, m_0) = \frac{4^{m_0/2}
\, \left|\, \Gamma\left(\frac{m_0}{2} + 1 - i\, \frac{m_0\, \mu_0}{2}\right)\, \right|^2 }
{\pi\, \Gamma(m_0)},
$$
then we write
$$
X \sim \text{Pearson IV}(\mu_0, m_0),
$$
which has mean and variance
$$
\EXP \{X \} = \mu_0 \ \text{ and }\ \Var \{X \} = (\mu_0^2 + 1)/(m_0 - 1).
$$

If $X$ is a random variable having density belonging to the natural exponential family generated by the convolved hyperbolic secant distribution with parameters $\mu$ and $n$, with density
$$
f_X(x) = H(x, n)\, \exp\{x\, \arctan(\mu) - \frac{n}{2}\, \log(\mu^2 + 1)\},
\ x \in \mathbb{R},
$$
where $H(x, n)$ is the density of a convolved hyperbolic secant distribution,
$$
H(x, n) = \frac{2^{n - 2}}{\pi\, \Gamma(n)} \left|\, \Gamma\left(\frac{n}{2} + i\, \frac{x}{2}\right)\, \right|^2,
$$
then we write
$$
X \sim \textsc{nef-chs}(\mu, n),
$$
which has mean and variance
$$
\EXP \{X \} = n\,\mu \ \text{ and }\ \Var\{X\} = n\, (\mu^2 + 1).
$$
Note that if $\bar{X} = X/n$, then $\EXP \{ \bar{X} \} = \mu \ \text{ and }\ \Var\{ \bar{X} \} = (\mu^2 + 1)/n$.
See Morris \cite{M1, M2} for additional properties of the \textsc{nef-chs} and the other five natural exponential families with quadratic variance functions. Devroye \cite{D3} defines a uniformly fast and exact algorithm for generating \textsc{nef-chs} variates.

Given these definitions, we assume that a current observation $Y$ and a future observation $Z$ are conditionally independent given an unknown mean parameter $\mu \in \mathbf{R}$ and a known sample size $n \geq 1$, with common density belonging to the \textsc{nef-chs} sampling family,
$$
Y, Z \mid \mu \overset{ind}{\sim} \textsc{nef-chs}(\mu, n).
$$
Further, we assume $\mu$ has a Pearson IV prior distribution with parameters $\mu_0$ and $m_0$,
$$
\mu \sim \text{Pearson IV}(\mu_0, m_0).
$$
Note that $\mu$ is a parameter in the sampling family and a random variable in the prior distribution. This is a standard pattern for Bayesian statistical models.

Letting $\mu_1 = (m_0\, \mu_0 + y) / (m_0 + n)$ and $m_1 = m_0 + n$, Bayes' Theorem enables us to compute the posterior distribution of $\mu$ given $Y = y$,
$$
(\mu \mid Y = y) \sim \text{Pearson IV}(\mu_1, m_1),
$$
with mean $\EXP \{\mu \mid Y=y \} = \mu_1$ and variance $\mathbb{VAR}(\mu \mid Y=y) = (\mu_1^2 + 1)/(m_1 - 1)$. Because the prior and posterior both belong to the Pearson IV family, it is called the conjugate family for \textsc{nef-chs} sampling.

The prior predictive distribution of $Y$ has density
$$
f_Y(y) = H(y, n)\, \frac{K(\mu_0, m_0)}{K(\mu_1, m_1)},
$$
with mean and variance
$$
\EXP\{Y\} = n\, \mu_0 \ \text{ and }
\Var\{Y\} = n\, (\mu_0^2 + 1)\, \frac{m_0 + n}{m_0 - 1}.
$$
We call this the ``Pearson IV--\textsc{nef-chs}$(n, \mu_0, m_0)$'' distribution, in analogy with the name beta-binomial for a beta mixture of binomials, and we write
$$
Y \sim \text{Pearson IV--}\textsc{nef-chs}(n, \mu_0, m_0).
$$

The posterior distribution is the reference distribution for estimating $\mu$. The prior predictive distribution is the reference distribution for model checking. See, for example, Box \cite{box1980sampling, box1983apology}. The posterior predictive distribution of $Z$ given $Y = y$ is
$$
(Z \mid Y = y) \sim \text{Pearson IV--}\textsc{nef-chs}(n, \mu_1, m_1).
$$
This is the reference distribution used to predict future observations.

Given this setup, values of $\mu$ from either the prior distribution or the posterior distribution can be generated using the algorithms developed in the earlier sections of this article. To generate values of $Y$ from the prior predictive distribution, we use a two-step process: first, (i) generate $\mu \sim \text{Pearson IV}(\mu_0, m_0)$, then (ii) generate $Y \mid \mu \sim \textsc{nef-chs}(\mu, n)$. To generate values of $Z$ from the posterior predictive distribution given $Y = y$, we use a similar two step process: first, (i) generate $\mu \sim \text{Pearson IV}(\mu_1, m_1)$, then (ii) generate $Z \mid \mu \sim \textsc{nef-chs}(\mu, n)$; note the change from $(\mu_0, m_0)$ to $(\mu_1, m_1)$.

\bibliographystyle{plainnat}
\bibliography{p.bib}
\end{document}